\let\csname equation*\endcsname\relax
\let\csname endequation*\endcsname\relax
\tikzset{
	>=triangle 45,
	box/.style={draw = black, rectangle, rounded corners, inner sep=4pt,fill=white,text=black, text centered},
	arro/.style={line width=1pt,gray,->,shorten <=2pt,shorten >=1pt}
}
\newcommand{\be}{\begin{equation}}
\newcommand{\ee}{\end{equation}}
\def\reff#1{(\protect\ref{#1})}
\newtheorem{lemma}{Lemma}
\newtheorem{pros}{Proposition}[section]
\def\reff#1{(\protect\ref{#1})}
\newcommand{\plas}{\mathfrak{p}}
\newcommand{\Pad}{\mathop{\rm Pad}}
\begin{document}

\title{Plastic number and possible optimal solutions for an Euclidean 2-matching in one dimension}
\author{Sergio Caracciolo}\ead{sergio.caracciolo@mi.infn.it}
\address{Dipartimento di Fisica, University of Milan and INFN, via Celoria 16, 20133 Milan, Italy}
\author{Andrea Di Gioacchino}\ead{andrea.digioacchino@unimi.it}
\address{Dipartimento di Fisica, University of Milan and INFN, via Celoria 16, 20133 Milan, Italy}
\author{Enrico M. Malatesta}\ead{enrico.m.malatesta@gmail.com}
\address{Dipartimento di Fisica, University of Milan and INFN, via Celoria 16, 20133 Milan, Italy}

\date{\today}
\begin{abstract}
In this work we consider the problem of finding the minimum-weight loop cover of an undirected graph. This combinatorial optimization problem is called 2-matching and can be seen as a relaxation of the traveling salesman problem since one does not have the unique loop condition. We consider this problem both on the complete bipartite and complete graph embedded in a one dimensional interval, the weights being chosen as a convex function of the Euclidean distance between each couple of points. Randomness is introduced throwing independently and uniformly the points in space. We derive the average optimal cost in the limit of large number of points. We prove that the possible solutions are characterized by the presence of ``shoelace'' loops containing 2 or 3 points of each type in the complete bipartite case, and 3, 4 or 5 points in the complete one. This gives rise to an exponential number of possible solutions scaling as $\plas^N$, where $\plas$ is the plastic constant. This is at variance to what happens in the previously studied one-dimensional models such as the matching and the traveling salesman problem, where for every instance of the disorder there is only one possible solution.

\end{abstract}

\section{Introduction}
Combinatorial optimization problems are a large class of problems in which one has to find in a finite and discrete space of configurations the one that minimizes an object function, called ``cost'' or ``energy'' function. Their interest in the physics community came in particular from their random version, in which some parameters of the cost function itself are random variables. In this case one is interested in evaluating average properties of the solution. This is at variance with the point of view of computational complexity where one focuses on the worst case scenario. In general, the typical instance of a random combinatorial optimization problem can be very different from the worst case~\cite{mertens2002computational}. However one can generate really hard instances tuning certain parameters of the model and observe abrupt changes of the typical computational complexity. Archetypal examples are the random K-SAT problem~\cite{monasson1997statistical, monasson1999determining} and the famous traveling salesman problem (TSP)~\cite{gent1996tsp}. Away from these critical values of parameters typical instances are, instead, easy to solve. This sudden change of behavior can be seen as phase transitions in physical systems~\cite{martina2001} and, for this reason, can be studied with techniques developed in statistical mechanics (see~\cite{monasson1999determining} and references therein). The general way of describing a random combinatorial optimization problem is to consider the cost function as the energy of a fictitious physical system at a certain temperature~\cite{kirkpatrick1983optimization,Sourlas1986,fu1986application}. Finding the minimum of the cost function is perfectly equivalent to study the low temperature properties of this physical system. Proceeding in this way, it turned out that, specially in mean field cases, the general theory of spin glasses and disordered systems could help not only to calculate those quantities at the analytical level using techniques like replica and cavity method~\cite{mezard1987spin}, but also to shed light on the design of new algorithms to find their solution~\cite{mezard2009information}. Celebrated is the result for the asymptotic value of the average optimal cost in the random assignment problem obtained by M\'{e}zard and Parisi~\cite{Mezard1985} using the replica method. The same result was obtained later via the cavity method~\cite{mezard1986mean, Parisi2001}. 

In this paper we consider a particular random combinatorial optimization problem called 2-matching (or 2-factor) which consists, given an undirected graph, in finding a spanning subgraph that contains only disjoint cycles. In Statistical Mechanics models on loops have been considered~\cite{Baxter}. In particular in two dimension loop coverings have been studied also in connection to conformal field theories (CFT), Schramm-Loewner evolution (SLE) and integrable models~\cite{morin2017two,Jesper}.

The 2-matching problem can be seen as a relaxation of the TSP, in which one has the additional constraint that there must be a unique cycle. We mention that both these problems can indeed be studied using replicas and the cavity method in infinite dimensions: one finds that, for large number of points, their average optimal cost is the same. However on the complete graph one can obtain a closed expression for the average optimal cost only using cavity method~\cite{Krauth1989}, since with the replica method~\cite{mezard1986replica} one has some unresolved technical problems.

Here we study the 2-matching problem in one dimension, both on the complete graph bipartitioning two sets of $N$ points and on the complete graph of $N$ vertices, throwing the points independently and uniformly in the compact interval $[0,1]$. The weights on the edges are chosen as a convex function of the Euclidean distance between adjacent vertices. Despite the fact that it is a one-dimensional problem, it is not a trivial one. In the following we show that, while almost for every instance of the disorder there is only one solution, by looking at the whole ensemble of instances there appears an exponential number of possible solutions scaling as $\plas^N$, where $\plas$ is the plastic constant. This is at variance with what happens for other random combinatorial optimization problems, like the matching problem and the TSP that were studied so far~\cite{Caracciolo:159,Caracciolo:160,Caracciolo:169,Caracciolo:171}. 
In both cases we know that, for every realization of the disorder, the configuration that solves the problem is unique.

The rest of the paper is organized as follows: in Sect.~\ref{sec:model} we give some definitions and we present our model in more detail. In Sect.~\ref{sec:2-matching} we write the cost of the 2-matching in terms of permutations and for every number of points we compare its cost with that of matching and TSP. We argue that, in the thermodynamic limit, its cost is twice the cost of the optimal matching. In Sect.~\ref{sec:solution} we characterize, for every number of points, the properties of the optimal solution. We compare it with the corresponding one of the TSP problem and we conclude that the number of possible solutions grows exponentially with $N$. In Sect.~\ref{sec:cost} we derive some upper bounds on the average optimal cost and in Sect.~\ref{sec:numerical} we compare them with numerical simulations, describing briefly the algorithm we have used to find numerically the solution. We study the finite-size corrections to the asymptotic average cost in the complete bipartite case, and the leading order in the complete case. Finally, in Sect.~\ref{sec:conclusions} we give our conclusions.

\section{The model}\label{sec:model}

Given a generic (undirected) graph $\mathcal{G} = (\mathcal{V}, \mathcal{E})$, a {\em factor} is a subgraph spanning on all the vertices, a $k$-{\em factor} is a factor $k$-regular, that is in which each vertex belongs exactly to $k$ edges. From now on we shall restrict to the case of simple graphs, i.e. undirected graphs in which self-loops, that is edges that connect a vertex to itself, and multiple edges, that is the possibility that two vertices are connected by more than one edge, are avoided. The adjacency matrix $A$ of a $k$-factor on a simple graph, which is symmetric by construction, has exactly $k$ entries $1$ in each row and therefore in each column, i.e. has to satisfy the constraints
\begin{equation}
\begin{aligned}
& \sum_{j=1}^{\left| \mathcal{V} \right|} A_{ij} = k\,, \qquad &  i \in \left[\left| \mathcal{V} \right|\right]\\
& A_{ij} \in \left\{ 0, 1 \right\} \,, \qquad & A_{ij} \leq G_{ij}
\end{aligned}
\label{ConstraintsGeneral}
\end{equation}
where $G$ is the adjacency matrix of the whole graph $\mathcal{G}$.

A 1-factor is a perfect matching, or a covering of the graph by disjoint dimers. A 2-factor is a perfect 2-matching, or a covering of the graph by disjoint loops. When a 2-factor is formed by only one loop, this is an {\em Hamiltonian cycle}.


Let us denote by $\mathcal{M}_2$ the set of 2-factors of the graph $\mathcal{G}$. Let us suppose now that a weight $w_e > 0$ is assigned to each edge $e \in \mathcal{E}$ of the graph $\mathcal{G}$. We can associate to each 2-factor $\nu\in \mathcal{M}_2$ a total cost
\be
E(\nu) :=  \sum_{e\in \nu} w_e \, .\label{E}
\ee
In the (weighted) 2-matching problem we search for the 2-factor $\nu^*\in \mathcal{M}_2$ such that the total cost in~\reff{E} is minimized,  that is
\be
E(\nu^*) = \min_{ \nu\in \mathcal{M}_2} E(\nu)\, . \label{nu^*}
\ee
If $\mathcal{H}$ is the set of Hamiltonian cycles for the graph $\mathcal{G}$, of course $\mathcal{H} \subset \mathcal{M}_2$ and therefore if $h^*$ is the optimal Hamiltonian cycle, we have
\begin{equation}
\label{Inequality}
E[h^*] \geq E[\nu^*] \,.
\end{equation}
One can assign the weights $w_e$ in different ways. For example, consider when the complete graph $\mathcal{K}_N$ is embedded in $[0,1]^d \subset \mathbb{R}^d$, that is at each $i\in [N] =\{1,2,\dots,N\}$ we associate a point $x_i\in [0,1]^d$, and for each $e=(i,j)$ with $i,j \in [N]$ we introduce a cost which is a function of their Euclidean distance
\be
w_e = |x_i-x_j|^p \,  \label{p}
\ee
with $p\in \mathbb{R}$. 
Analogously for the complete bipartite graph $\mathcal{K}_{N,N}$ we have two sets of points in $[0,1]^d$, that is, say, the reds $\{r_i\}_{i\in [N]}$ and the blues $\{b_i\}_{i\in [N]}$, and the edges connect red points with blue points with a cost
\be
w_e = |r_i-b_j|^p \, . \label{pb}
\ee
In the {\em random} 2-matching problem, the weights $w_e$'s are random variables.
In this case, the typical properties of the optimal solution are of interest, and in particular the average optimal cost
\be
\overline{E}  := \overline{E(\nu^*)} \,,
\ee
where we have denoted by a bar the average over all possible instances of the costs set. The simplest way to introduce randomness in the problem is to consider the weights $w_e$ independent and identically distributed random variables. For a discussion on this problem on an arbitrary graph $\mathcal{G}$, see~\cite{Zecchina} and references therein.


\begin{figure}[ht]
	\begin{subfigure}[t]{0.48\linewidth}
	\centering
	\includegraphics[width=1\columnwidth]{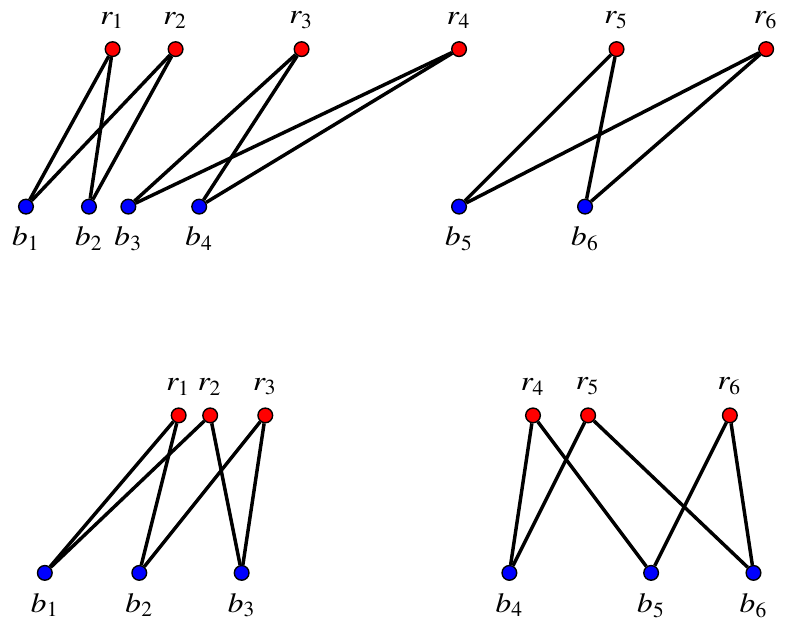}
	\caption{\footnotesize Two instances whose optimal solutions are the two possible $\nu^*$ for $N=6$ on the complete bipartite graph $\mathcal{K}_{N,N}$. For each instance the blue and red points are chosen in the unit interval and sorted in increasing order, then plotted on parallel lines to improve visualization.} \label{N=6}
\end{subfigure} \hfill
\begin{subfigure}[t]{0.48\linewidth}
	\centering
	\includegraphics[width=1\columnwidth]{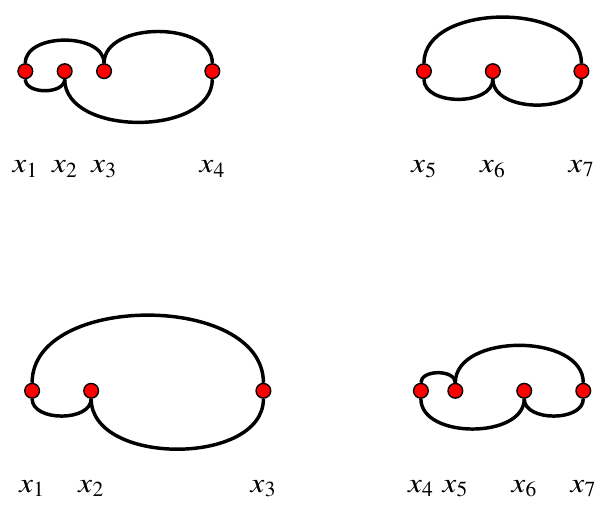}
	\caption{\footnotesize Two instances whose optimal solutions are the two possible  $\nu^*$ for $N=7$ on the complete graph $\mathcal{K}_N$. For each instance the points are chosen in the unit interval and sorted in increasing order. }  \label{N=7_mono}
\end{subfigure}
\caption{Optimal solutions for small $N$ cases.}
\end{figure}

\section{The Euclidean 2-matching problem}\label{sec:2-matching}

In the random Euclidean 2-matching problem the positions of the points are generated at random and, as a consequence, the weights will be correlated.
Let us start by making some considerations when the problem is defined on the complete bipartite graph $\mathcal{K}_{N,N}$, where each cycle must have an even length.

Let $\mathcal{S}_N$ be the symmetric group of order $N$ and consider two permutations $\sigma, \pi \in \mathcal{S}_N$. If for every $i\in [N]$ we have that $\sigma(i) \neq \pi(i)$, then the two permutations define the 2-factor $\nu(\sigma,\pi)$ with edges
\begin{align}
e_{2i-1} \; := \; & (r_i, b_{\sigma(i)})\\
e_{2i} \; := \; & (r_i, b_{\pi(i)})
\end{align}
for $i\in[N]$. And, viceversa, for any 2-factor $\nu$ there is a couple of permutations  $\sigma, \pi \in \mathcal{S}_N$,  such that for every $i\in [N]$ we have that $\sigma(i) \neq \pi(i)$.

t will have total cost
\be
E[\nu(\sigma, \pi)] = \sum_{i\in [N]} \left[ |r_i - b_{\sigma(i)}|^p +  |r_i - b_{\pi(i)}|^p \right] \, .
\ee
By construction, if we denote by $\mu[\sigma]$ the matching associated to the permutation $\sigma$ and by
\be
E[\mu(\sigma)] := \sum_{i\in [N]}  |r_i - b_{\sigma(i)}|^p
\ee
its cost, we soon have that
\be
E[\nu(\sigma, \pi)] = E[\mu(\sigma)] + E[\mu(\pi)]
\ee
and we recover that
\be
\label{bip::inequality}
E[\nu^*] \geq 2\, E[\mu^*]
\ee
the cost of the optimal 2-factor is necessarily greater or equal to twice the optimal 1-factor. Together with inequality~(\ref{Inequality}), which is valid for any graph, we obtain that
\be
\label{Inequalities} 
E[h^*] \geq E[\nu^*] \geq 2\, E[\mu^*] \, .
\ee
In~\cite{Caracciolo:171} we have seen that in the limit of infinitely large $N$, in one dimension and with $p>1$, the average cost of the optimal Hamiltonian cycle is equal to twice the average cost of the optimal matching (1-factor). We conclude that the average cost of the 2-matching must be the same. In the following we will denote with $\overline{E_{N,N}^{(p)}[\nu^*]}$ the average optimal cost of the 2-matching problem on the complete bipartite graph. Its scaling for large $N$ will be the same of the TSP and the matching problem, that is the limit
\begin{equation}
\lim\limits_{N\to \infty} \frac{\overline{E_{N,N}^{(p)}[\nu^*]}}{N^{1-p/2}} = E^{(p)}_B \,,
\end{equation}
is finite. An explicit evaluation in the case $p=2$ is presented in Sec.~\ref{sec:cost}.

On the complete graph $\mathcal{K}_N$ inequality (\ref{bip::inequality}) does not hold, since a general 2-matching configuration cannot always be written as a sum of two disjoint matchings, due to the presence of odd-length loops. Every 2-matching configuration on the complete graph can be determined by only one permutation $\pi$, satisfying $\pi(i) \ne i$ and $\pi(\pi(i)) \ne i$ for every $i\in [N]$. The cost can be written as
\begin{equation}
E[\nu(\pi)] = \sum_{i\in [N]}  |x_i - x_{\pi(i)}|^p \,.
\end{equation}
The two constraints on $\pi$ assure that the permutation does not contain fixed points and cycles of length 2. 
In the following we will denote with $\overline{E_{N}^{(p)}[\nu^*]}$ the average optimal cost of the 2-matching problem on the complete graph. Even though inequality~(\ref{bip::inequality}) does not hold, we expect that for large $N$, the average optimal cost scales in the same way as the TSP and the matching problem, i.e. as
\begin{equation}
\lim\limits_{N \to \infty} \frac{\overline{E_{N}^{(p)}[\nu^*]}}{N^{1-p}} = E^{(p)}_M \,.
\end{equation}
In Sect.~\ref{sec:numerical} we give numerical evidence for this scaling.

\begin{figure*}[t]
	\begin{subfigure}[t]{0.49\linewidth}
		\centering
		\includegraphics[width=0.9\columnwidth]{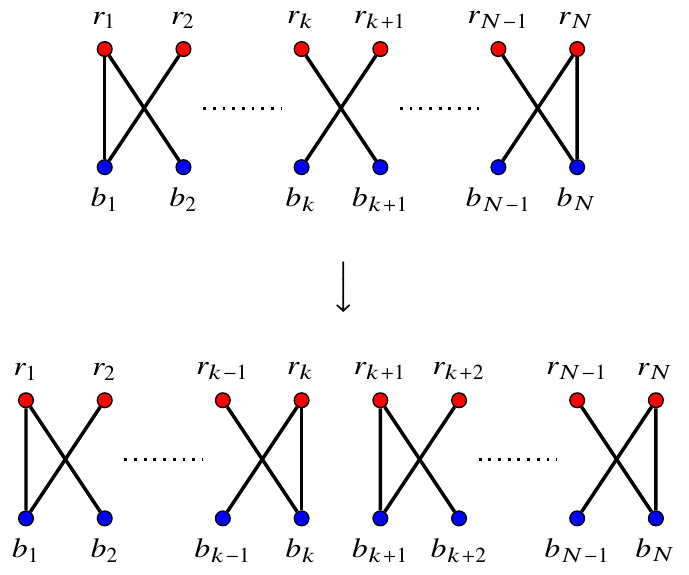}
		\caption{\footnotesize$\mathcal{K}_{N,N}$ case}
		\label{Fig::bip}
	\end{subfigure} \hfill
	\begin{subfigure}[t]{0.49\linewidth}
		\centering
		\includegraphics[width=1\columnwidth]{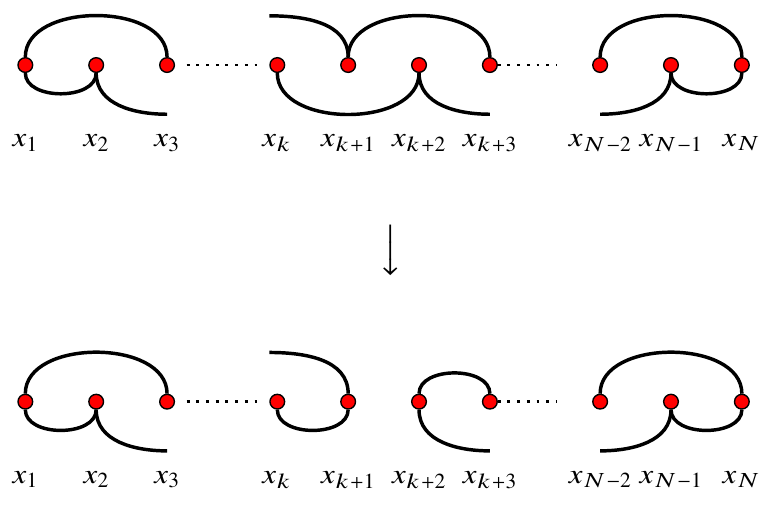}
		\caption{\footnotesize $\mathcal{K}_N$ case}
		\label{Fig::mono}
	\end{subfigure}
	\caption{Result of one cut of the shoelace in two smaller ones for both the complete bipartite and complete graph cases. The cost gained is exactly the difference between an unordered matching and an ordered one.}
\end{figure*}

\section{Properties of the solution for $d=1$}\label{sec:solution}

We restrict here to the particular case in which the parameter $p$ appearing in the definition of the cost~\reff{p} is such that $p>1$, that is the weight associated to an edge is a convex and increasing function of the Euclidean distance between its two vertices. 
In such a case we know exactly, for every number of points, the optimal solution of the matching problem both on the bipartite~\cite{Caracciolo:159, Caracciolo:160, Caracciolo:169} and the complete graph~\cite{Caracciolo:169} and of the TSP problem, again on both its bipartite~\cite{Caracciolo:171} and complete graph version~\cite{CDMV}. The knowledge of the optimal configuration of those problems permits to write down several properties of the solution of the 2-matching.  

\subsection{Bipartite Case}
The adjacency matrix of a bipartite graph with the same cardinality $N$ 
of red and blue points can always be written in block form
\begin{equation}
A = \left(
\begin{array}{cc}
\bold 0 & B \\
B^T & \bold 0
\end{array}
\right) \,,
\label{BipartiteAdj}
\end{equation}
where $B$ is a $N \times N$ matrix containing the only non-zero entries corresponding to edges connecting the two different types of points.
Of course $B$ uniquely identifies the adjacency matrix $A$.

Now, suppose that both blue and red points are labeled in increasing order, that is if $i>j$ with $i,j\in [N]$, then $r_i>r_j$ and $b_i>b_j$,  the permutation which minimizes the cost of the matching is necessarily the identity permutation $\mu^*(i) = i$ for $i\in [N]$, so that
\be
E[\mu^*] = \sum_{i\in [N]} |r_i - b_i|^p \, .
\ee
The optimal matching corresponds to
\be
B = \left(
\begin{array}{cccc}
 1 & \cdots & \cdots &  0 \\
 \vdots & \ddots &  & \vdots  \\
 \vdots &   & \ddots  & \vdots  \\ 
 0 & \cdots & \cdots &  1
\end{array}
\right) \, .
\ee
Since the total adjacency matrix is of the form (\ref{BipartiteAdj}), $B$ has to satify the following constraints
\begin{subequations}
	\begin{align}
	& \sum_{i=1}^{N} B_{ij} = 1\,, \qquad j \in \left[N\right]\\
	& \sum_{j=1}^N B_{ij} = 1\,, \qquad i \in \left[N\right]\\
	& B_{ij} \in \left\{ 0, 1 \right\} \,.
	\end{align}
	\label{Matching}
\end{subequations}
The first two constraints impose that only one edge must depart from each blue and each red vertex respectively.
For the TSP, the optimal Hamiltonian cycle $h^*$ is derived from the two permutations $\tilde{\sigma}$ and $\tilde{\pi}$ defined by
\be
\tilde{\sigma}(i) = 
\begin{cases}
2i-1 & i \leq  (N+1)/2 \\
2N -2i +2 & i > (N+1)/2 \label{sigmatilde}
\end{cases}
\ee
and
\be
\tilde{\pi}(i) = \tilde{\sigma}(N+1-i)  =
\begin{cases}
2i  & i < (N +1)/2 \\
2N - 2i +1 & i \geq  (N +1)/2 \label{pitilde}
\end{cases}
\end{equation}
for all $i\in [N]$. The corresponding Hamiltonian cycle, given by
$(r_1, b_2, r_3,\dots, b_3, r_2, b_1, r_1)$, that is called the {\em shoelace} configuration, has an optimal  cost
\begin{equation}
\begin{aligned}
E[h^*] = |r_1-b_1|^p + |r_N-b_N|^p +\sum_{i=1}^{N-1}\left[ |b_{i+1} - r_i|^p  + |r_{i+1} - b_i|^p \right] \label{EN}.
\end{aligned}
\end{equation}
The optimal Hamiltonian cycle corresponds to the adjacency matrix
\be
B = \left(
\begin{array}{cccccc}
 1 & 1 & 0 & 0 & \cdots &  0 \\
 1 & 0 & 1 & 0 & \cdots &  0 \\
 0 & 1 & 0 & 1 & \cdots &  0 \\
 \vdots &  \ddots & \ddots & \ddots & \ddots & \vdots \\
 0 & \cdots & 0 & 1 & 0 & 1 \\
 0 & \dots & 0 & 0 & 1 & 1
\end{array}
\right) \, .
\ee
Let us now look for the optimal solutions for the 2-matching.
The adjacency matrix of a valid 2-matching must satisfy constraints analogous to those of the matching problem, i.e.
\begin{subequations}
	\begin{align}
	& \sum_{i=1}^{N} B_{ij} = 2\,, \qquad j \in \left[N\right]\\
	& \sum_{j=1}^N B_{ij} = 2\,, \qquad i \in \left[N\right]\\
	& B_{ij} \in \left\{ 0, 1 \right\} \,.
	\end{align}
	\label{2-Matching}
\end{subequations}
The only difference with (\ref{Matching}) is that from every blue or red vertex must depart two edges.

For $N=2$ there is only one configuration. It can be defined by the adjacency matrix
\be
B_2= \left(
\begin{array}{cc}
 1 & 1 \\
 1 & 1 \\
\end{array}
\right)\, .
\ee
For $N=3$ the solution is the same as in the TSP
\be
B_3 = \left(
\begin{array}{ccc}
 1 & 1 & 0 \\
 1 & 0 & 1 \\
 0 & 1 & 1 \\
\end{array}
\right)\, .
\ee
For $N=4$ the solution has two simple cycles 
\be
B_2^{(2)}= 
\left(
\begin{array}{cccc}
 1 & 1 & 0 & 0 \\
 1 & 1 & 0 & 0 \\
 0 & 0 & 1 & 1 \\
 0 & 0 & 1 & 1 \\
\end{array}
\right)\, .
\ee
For $N=5$ there are two symmetric possible solutions
\be
B_{2,3} = \left(
\begin{array}{ccccc}
 1 & 1 & 0 & 0 & 0 \\
 1 & 1 & 0 & 0 & 0 \\
 0 & 0 & 1 & 1 & 0 \\
 0 & 0 & 1 & 0 & 1 \\
 0 & 0 & 0 & 1 & 1 \\
\end{array}
\right) \qquad
B_{3,2} = \left(
\begin{array}{ccccc}
 1 & 1 & 0 & 0 & 0 \\
 1 & 0 & 1 & 0 & 0 \\
 0 & 1 & 1 & 0 & 0 \\
 0 & 0 & 0 & 1 & 1 \\
 0 & 0 & 0 & 1 & 1 \\
\end{array}
\right) \, .
\ee
For $N=6$ there are two possible solutions (not related by symmetry)
\be
B_2^{(3)} = \left(
\begin{array}{cccccc}
 1 & 1 & 0 & 0 & 0 & 0 \\
 1 & 1 & 0 & 0 & 0 & 0 \\
 0 & 0 & 1 & 1 & 0 & 0 \\
 0 & 0 & 1 & 1 & 0 & 0 \\
 0 & 0 & 0 & 0 & 1 & 1 \\
 0 & 0 & 0 & 0 & 1 & 1 \\
\end{array}
\right)  \qquad
B_3^{(2)} = \left(
\begin{array}{cccccc}
 1 & 1 & 0 & 0 & 0 & 0 \\
 1 & 0 & 1 & 0 & 0 & 0 \\
 0 & 1 & 1 & 0 & 0 & 0 \\
 0 & 0 & 0 & 1 & 1 & 0 \\
 0 & 0 & 0 & 1 & 0 & 1 \\
 0 & 0 & 0 & 0 & 1 & 1 \\
\end{array}
\right) \, .
\ee

The possible solutions for $N=6$ and are represented schematically in Fig. \ref{N=6}. For $N=7$ there are three solutions and so on. 

\begin{lemma}
In any optimal 2-matching $\nu^*$ all the loops must be in the shoelace configuration.
\end{lemma}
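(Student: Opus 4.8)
The plan is to show that if some loop in an optimal 2-matching $\nu^*$ is not in the shoelace configuration, we can produce a strictly cheaper 2-matching, contradicting optimality. First I would set up the local structure: restrict attention to a single loop $L$ of $\nu^*$ with vertices (say in the bipartite case) $r_{i_1},\dots,r_{i_k}$ and $b_{j_1},\dots,b_{j_k}$ with the red and blue labels taken in increasing order. The crucial observation — exactly the content of the figure with caption ``Result of one cut of the shoelace in two smaller ones'' — is that, for $p>1$, the cost of a loop on a fixed set of points is minimized (among all loops on those points) precisely by the shoelace arrangement, because any ``crossing'' of two edges can be locally uncrossed at a strict gain. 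So the argument has two independent parts: (i) a single loop that is not a shoelace can be replaced by the shoelace on the same point set at strictly lower cost; and (ii) even the shoelace on a point set with $k\ge 4$ red and $k\ge 4$ blue points (resp. $k\ge 6$ points in the complete-graph case) is not optimal, since it can be cut into two shorter shoelace loops at a strict gain. Part (ii) is what forces the ``$2$ or $3$ points of each type'' (resp.\ ``$3,4$ or $5$ points'') conclusion, but for the present Lemma only part (i) is needed.

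For part (i), the workhorse is the standard one-dimensional convexity (``strict crossing'') inequality: if $a\le b\le c\le d$ are four points on the line and $p>1$, then
\be
|c-a|^p+|d-b|^p \;<\; |d-a|^p+|c-b|^p
\ee
(and likewise $|b-a|^p+|d-c|^p$ is even smaller), which follows from strict convexity of $t\mapsto t^p$ on $[0,\infty)$ together with the fact that the pairings on the left are ``less spread out.'' I would prove this inequality once as a sublemma. Then, given a loop $L$ that is not a shoelace, I claim there exist two edges of $L$ that ``cross'' in the sense that, writing them as connecting points $a<b$ and $c<d$ on the line with the four values interleaved as $a\le c\le b\le d$ (or $a\le c\le d\le b$), swapping the endpoints to the uncrossed pairing strictly lowers the total edge cost and keeps the degree sequence intact. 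One must check the swap does not disconnect the cycle into two cycles in a way that breaks the 2-factor constraint — but in fact either outcome (a single shorter loop structure, or a split into two loops) is still a valid 2-factor, so optimality is still contradicted. Iterating the uncrossing (it strictly decreases cost each time, and there are finitely many 2-factors) terminates at a configuration in which every loop is crossing-free, and a crossing-free loop on points of a bipartite line segment is exactly the shoelace.

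The main obstacle I anticipate is the bookkeeping of part (i): carefully characterizing ``not a shoelace $\Rightarrow$ some pair of edges crosses,'' and verifying that the uncrossing move is always available as a move among 2-factors (i.e.\ that it never requires introducing a forbidden multi-edge or a self-loop, and that the resulting object still satisfies the constraints \reff{2-Matching}). In the complete-graph case there is the extra nuisance that uncrossing could create a $2$-cycle or a fixed point, so the move must be chosen to avoid this — one checks that among the available uncrossings at least one respects $\pi(i)\ne i$ and $\pi(\pi(i))\ne i$, or else the loop was already a short shoelace. Once the uncrossing move is shown to be legitimate and strictly cost-decreasing, the rest is a short finiteness/termination argument. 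I would present the sublemma inequality and the uncrossing move explicitly, then invoke both $E[h^*]\ge E[\nu^*]$-type reasoning and the cut-in-two picture (deferring the sharper ``$\le 3$ points'' bound to the subsequent discussion) to close the proof.
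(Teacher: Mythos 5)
Your high-level strategy coincides with the paper's: restrict to a single loop $L$ of $\nu^*$, observe that $L$ is a Hamiltonian cycle on its own (equinumerous red/blue) point set, and argue that if $L$ is not the shoelace it can be replaced by the shoelace on the same points at strictly lower cost, contradicting optimality. The paper, however, closes this in one line by citing its earlier result that for $p>1$ the shoelace is the optimal Hamiltonian cycle on any fixed one-dimensional point set; you instead propose to re-derive that fact by iterated uncrossing, and this is where your argument has a genuine gap.

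The gap is the step ``a crossing-free loop on points of a bipartite line segment is exactly the shoelace.'' First, under the natural interval notion of crossing this is false: with $a\le b\le c\le d$ and $p>1$ one has $|b-a|^p+|d-c|^p\le |c-a|^p+|d-b|^p\le |d-a|^p+|c-b|^p$, i.e.\ disjoint beats interleaved beats nested, and the shoelace itself is built almost entirely out of interleaved and even nested pairs of edges (e.g.\ for $r_1<b_1<r_2<b_2$ the shoelace edges $(r_1,b_2)$ and $(r_2,b_1)$ are nested; the beneficial swap to $(r_1,b_1),(r_2,b_2)$ is blocked only because $(r_1,b_1)$ is already present and $B_{ij}\in\{0,1\}$ forbids a doubled edge). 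So the terminal configurations of your process are not ``crossing-free'' loops but loops admitting no \emph{legal} strictly improving $2$-opt move, and identifying these with shoelaces is precisely the nontrivial content of the TSP result in~\cite{Caracciolo:171} (and~\cite{CDMV} for the complete graph) --- local $2$-opt optimality does not in general imply global optimality, so a termination argument alone does not deliver the shoelace. Either prove that characterization (which is a substantial sublemma, not bookkeeping), or do what the paper does and invoke the known optimality of the shoelace Hamiltonian cycle on an arbitrary fixed point set; with that citation in hand your part~(i) reduces to the paper's two-sentence proof, and your part~(ii) is correctly identified as belonging to the subsequent Lemma rather than to this one.
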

\begin{proof}
In each loop there is the same number of red and blue points. Our general result for the TSP~\cite{Caracciolo:171} shows indeed that the shoelace loop is always optimal when restricted to one loop.
\end{proof}

\begin{lemma}
In any optimal 2-matching $\nu^*$ there are no loops with more than 3 red points.
\end{lemma}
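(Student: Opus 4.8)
The plan is a local exchange argument: if an optimal loop cover contained a loop through four or more red points, I would split that loop into two legal, strictly cheaper loops, contradicting the optimality of $\nu^*$. By Lemma~1 every loop of $\nu^*$ is a shoelace, so it is enough to study a single shoelace loop $\gamma$ with red points $\rho_1 < \dots < \rho_k$ and blue points $\beta_1 < \dots < \beta_k$ (the $2N$ points of the instance being almost surely distinct), assuming $k\ge 4$. From the explicit edge list of the shoelace recorded in \reff{EN} --- namely $(\rho_1,\beta_1)$, $(\rho_k,\beta_k)$ and $(\rho_i,\beta_{i+1}),(\rho_{i+1},\beta_i)$ for $i=1,\dots,k-1$ --- one reads off that the four vertices $\rho_1,\rho_2,\beta_1,\beta_2$ are joined to the remaining vertices of $\gamma$ only through the two edges $(\rho_2,\beta_3)$ and $(\rho_3,\beta_2)$.

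The move is to delete $(\rho_2,\beta_3)$ and $(\rho_3,\beta_2)$ and insert $(\rho_2,\beta_2)$ and $(\rho_3,\beta_3)$, leaving every other loop of $\nu^*$ untouched. A direct check on the incidences shows that $\gamma$ is thereby replaced by the $4$-cycle $\rho_1\,\beta_1\,\rho_2\,\beta_2$ (the $B_2$ configuration) together with the shoelace on $\{\rho_3,\dots,\rho_k,\beta_3,\dots,\beta_k\}$; both pieces have an even number of vertices and at least two red points each ($2$ and $k-2\ge 2$), so the result $\nu'$ is again an element of $\mathcal{M}_2$. Its cost differs from that of $\nu^*$ by
\be
E[\nu'] - E[\nu^*] = \Bigl(|\rho_2-\beta_2|^p + |\rho_3-\beta_3|^p\Bigr) - \Bigl(|\rho_2-\beta_3|^p + |\rho_3-\beta_2|^p\Bigr)\,,
\ee
that is, the cost of the ordered matching of $\{\rho_2,\rho_3\}$ with $\{\beta_2,\beta_3\}$ minus that of the crossing one, exactly as anticipated in Fig.~\ref{Fig::bip}.

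It remains to check that this difference is negative. Since $p>1$, $f(t):=|t|^p$ is strictly convex, hence $h(t):=f(t+v)-f(t)$ is strictly increasing for every fixed $v>0$. Applying this with $v:=\beta_3-\beta_2>0$ at the two arguments $\rho_2-\beta_3 < \rho_3-\beta_3$ gives $h(\rho_2-\beta_3) < h(\rho_3-\beta_3)$, which is precisely $|\rho_2-\beta_2|^p + |\rho_3-\beta_3|^p < |\rho_2-\beta_3|^p + |\rho_3-\beta_2|^p$. Therefore $E[\nu'] < E[\nu^*]$, contradicting optimality, and so no loop of $\nu^*$ can have more than three red points.

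I do not expect a serious obstacle: the real content is the elementary submodular ("Monge") inequality for $(x,y)\mapsto|x-y|^p$ with $p>1$, and the only genuine bookkeeping is verifying that the proposed edge swap truly disconnects the long shoelace into a $4$-cycle plus a shorter shoelace, rather than producing an illegal configuration (e.g.\ a $2$-cycle). The same swap performed at a generic interior position $m$ in place of $m=2$ would give a one-parameter family of such cuts, but for the present statement the single choice $m=2$ suffices.
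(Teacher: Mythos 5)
Your proof is correct and follows exactly the paper's argument: you cut the long shoelace into a $4$-cycle plus a shorter shoelace and observe that the cost change is the ordered-minus-crossed matching difference on $\{\rho_2,\rho_3\}\times\{\beta_2,\beta_3\}$, which is negative for $p>1$. The only difference is cosmetic --- the paper cites the Monge-type inequality from earlier work on the matching problem, whereas you rederive it from strict convexity, and you make explicit the edge bookkeeping that the paper delegates to Fig.~\ref{Fig::bip}.
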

\begin{proof}
As soon as the number of red points (and therefore blue points) in a loop is larger than 3, a more convenient 2-matching is obtained by considering a 2-matching with two loops. In fact, as can be seen in Fig.~\ref{Fig::bip}, the cost gain is exactly equal to the difference between an ordered and an unordered matching which we know is always negative for $p>1$ \cite{Caracciolo:159}.
\end{proof}

It follows that
\begin{pros}
\label{2and3}
In any optimal bipartite 2-matching $\nu^*$ there are only shoelaces loops with 2 or 3 red points.
\end{pros}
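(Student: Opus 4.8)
The plan is to deduce the proposition directly from the two preceding lemmas together with the simple-graph constraint recorded in Section~\ref{sec:model}. First I would invoke Lemma~1 to assert that every loop of $\nu^*$ is a shoelace, so the only remaining freedom is the number of red points (equivalently blue points, since these coincide in every loop) inside each shoelace. Then I would apply Lemma~2 to bound this number from above by $3$. It therefore only remains to exclude loops with a single red point.

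The step ruling out one red point is where I would be slightly careful, though it is short: a loop with exactly one red point $r$ and one blue point $b$ would consist of two copies of the edge $(r,b)$, i.e. a multiple edge, which is forbidden since we restrict to simple graphs (recall the constraint $A_{ij}\le G_{ij}$ with $G$ the adjacency matrix of $\mathcal{K}_{N,N}$, which has no self-loops or parallel edges). Hence no admissible $2$-matching, optimal or not, can contain such a loop, and in particular $\nu^*$ cannot. Combining this with the upper bound $3$ from Lemma~2 leaves only the values $2$ and $3$ for the number of red points in each shoelace loop of $\nu^*$, which is the assertion of the proposition.

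I do not anticipate any real obstacle here: the content is entirely carried by Lemmas~1 and~2, and the proposition is essentially their conjunction once the degenerate length-two loop is discarded on parity/simplicity grounds. If one wished to make the writeup completely self-contained one could additionally remark that the $N=2,3,4,5,6$ adjacency matrices $B_2$, $B_3$, $B_2^{(2)}$, $B_{2,3}$, $B_{3,2}$, $B_2^{(3)}$, $B_3^{(2)}$ exhibited above already realise all combinatorially allowed block decompositions into $2$- and $3$-red-point shoelaces, illustrating that both values genuinely occur.
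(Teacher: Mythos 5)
Your proposal is correct and matches the paper's route exactly: the paper derives Proposition \ref{2and3} as an immediate consequence of the two preceding lemmas (shoelace structure from Lemma~1, upper bound of $3$ from Lemma~2), with the exclusion of single-red-point loops left implicit. Your extra remark that a one-red/one-blue loop would be a doubled edge, forbidden on a simple graph, just makes explicit what the paper takes for granted.
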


In different words to the optimal bipartite 2-matching solution $\nu^*$ is associated an adjacency matrix which is a block matrix built with the  sub-matrices $B_2$ and $B_3$. Two different 2-matchings in this class are not comparable, that is all of them can be optimal in particular instances.

\begin{pros}
\label{number}
At given number $N$ of both red and blue points there are at most $\Pad(N-2)$  optimal 2-matching $\nu^*$.
\end{pros}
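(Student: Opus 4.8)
The plan is to reduce the statement to counting compositions of $N$ into parts $2$ and $3$. By Proposition~\ref{2and3}, once the red and blue points are relabelled in increasing order, any configuration $\nu$ that can occur as the optimum $\nu^*$ for some placement of the points has an adjacency matrix $B$ that is block diagonal, each diagonal block being a copy of $B_2$ (a shoelace on two consecutive reds and the two corresponding blues) or of $B_3$ (a shoelace on three consecutive reds and the three corresponding blues). The only remaining degree of freedom is then the ordered sequence of block sizes, i.e.\ an ordered decomposition $N = s_1 + s_2 + \cdots$ with each $s_k \in \{2,3\}$. The fact that the blocks are genuinely \emph{aligned intervals of consecutive indices}, rather than some interleaved pattern, is the non-crossing property of the one-dimensional optimal matching already used in the two lemmas above: if a shoelace loop straddled the points of another loop, the cut-and-reorder move of Fig.~\ref{Fig::bip} would replace an unordered matching by the ordered one and strictly lower the cost for $p>1$~\cite{Caracciolo:159}. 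Hence the set of admissible $\nu^*$ injects into the set of compositions of $N$ with parts in $\{2,3\}$, and it suffices to bound the cardinality of the latter.

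Let $c_N$ denote the number of such compositions. Conditioning on whether the first part equals $2$ or $3$ gives $c_N = c_{N-2} + c_{N-3}$; with the conventions $c_0 = 1$ (empty composition), $c_1 = 0$ and $c_2 = 1$ this recurrence holds for every $N \ge 3$, and it yields $c_3 = c_4 = 1$, $c_5 = c_6 = 2$, $c_7 = 3$, in agreement with the small-$N$ cases listed above. This is precisely the Padovan recurrence, so a one-line induction — base cases $c_2 = \Pad(0)$, $c_3 = \Pad(1)$, $c_4 = \Pad(2)$, induction step $c_N = c_{N-2}+c_{N-3} = \Pad(N-4)+\Pad(N-5) = \Pad(N-2)$ for $N \ge 5$ — gives $c_N = \Pad(N-2)$ for all $N \ge 2$, where $\Pad$ is the Padovan sequence normalised by $\Pad(0)=\Pad(1)=\Pad(2)=1$. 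Therefore at fixed $N$ there are at most $\Pad(N-2)$ configurations that can arise as an optimal bipartite $2$-matching, which is the claim.

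The inequality, rather than an equality, simply reflects that for a single realization generically only one composition is the minimizer; the reverse bound would require exhibiting, for each composition, a placement of the points realizing it, which is not needed here. The only step that is more than bookkeeping is the reduction in the first paragraph — that an optimal union of small shoelaces is forced to be block diagonal with aligned consecutive blocks — but this is already contained in Proposition~\ref{2and3} and the non-crossing property used in its proof, so no real obstacle remains once those are in hand.
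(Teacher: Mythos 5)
Your argument is correct and follows essentially the same route as the paper: Proposition~\ref{2and3} reduces the admissible optima to block-diagonal matrices built from $B_2$ and $B_3$ on consecutive indices, so the count is the number of ordered compositions of $N$ into parts $2$ and $3$, which the paper (in~\ref{A}) likewise shows satisfies the Padovan recursion $f_N=f_{N-2}+f_{N-3}$ with $f_2=f_3=f_4=1$, giving $\Pad(N-2)$. Your explicit remark that the blocks must be aligned intervals (via the non-crossing property) is a small justification the paper leaves implicit, but it does not change the argument.
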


$\Pad(N)$ is the $N$-th {\em Padovan} number, see the~\ref{A}, where it is also shown in~\reff{asym} that for large $N$
\be
\Pad(N) \sim \plas^N
\ee
with $\plas$ the {\em plastic} number~\reff{plastic} (see~\ref{B} for a discussion on this constant). 

Actually, for values of $N$ which we could explore numerically, we saw that all $\Pad(N-2)$ possible solutions appear as optimal solutions in the ensemble of instances.

\subsection{Complete Case}
Similar conclusions can be derived in the case of the complete graph $\mathcal{K}_N$ since, as we have said, both the analytical solution for the matching \cite{Caracciolo:169} and the TSP \cite{CDMV} are known. Let us order the points in increasing order, i.e. $x_i>x_j$ if $i>j$ with $i$, $j \in [N]$. In the matching problem on the complete graph the number of points must be even, and with $p>1$ the solution is very simple: if $j>i$ then the point $x_i$ will be matched to $x_j$ if and only if $i$ is odd and $j=i+1$ that is
\begin{equation}
E[\mu^*] = \sum_{i\in [N]} |x_{2i} - x_{2i-1}|^p \,.
\end{equation}
The corresponding adjacency matrix assumes the block diagonal form
\be
A = \left(
\begin{array}{ccccc}
	\bold a & \bold 0 & \cdots & \cdots & \bold 0 \\
	\bold 0 & \bold a & \cdots & \cdots & \bold 0 \\
	\vdots & \vdots & \ddots &  & \vdots \\
	\bold 0 & \bold 0 &   & \bold a & \bold 0  \\ 
	\bold 0 & \bold 0 & \cdots & \bold 0 &  \bold a
\end{array}
\right) \,,
\label{adjMatching}
\ee
where 
\be
\bold a = \left(
\begin{array}{cc}
0 & 1 \\
1 & 0 
\end{array}
\right) \,.
\ee
The adjacency matrix (\ref{adjMatching}) satisfies constraints (\ref{ConstraintsGeneral}), with $k=1$.
In the case of the TSP on the complete graph, where the number of points can also be odd, the optimal permutation is the same $\tilde \sigma$ defined in~(\ref{sigmatilde})
\begin{equation}
h^* = \left( x_{\tilde \sigma(1)}, x_{\tilde \sigma(2)}, \dots , x_{\tilde \sigma(N)} \right) \,.
\end{equation}
With a slightly abuse of language we will call ``shoelace'' also the optimal loop configuration for the TSP problem on complete graph. The adjacency matrix is
\be
A = \left(
\begin{array}{ccccccc}
	0 & 1 & 1 & 0 & \cdots & \cdots & 0 \\
	1 & 0 & 0 & 1 & \cdots & \cdots & 0 \\
	1 & 0 & 0 & 0 & 1 & \cdots & 0 \\
	\vdots &  \ddots & \ddots & \ddots & \ddots & \ddots & \vdots \\
	0 & \cdots & 1 & 0 & 0 & 0 & 1 \\
	0 & \cdots & 0 & 1 & 0 & 0 & 1\\
	0 & \dots & 0 & 0 & 1 & 1 & 0
\end{array}
\right) \, .
\ee

The possible solutions for the 2-matching on complete graph can be constructed by cutting in a similar way the corresponding TSP solution into smaller loops as can be seen pictorially in Fig.~\ref{Fig::mono}. Note that one cannot have a loop with two points. Analogously to the bipartite case we have analyzed before, each loop that form the 2-matching configuration must be a shoelace. However the length of allowed loops will be different, since one cannot cut, on a complete graph, a TSP of 4 and 5 points in two smaller sub-tours. It follows that
\begin{pros}
	\label{3and4and5}
	On the complete graph, in the optimal 2-matching $\nu^*$ there are only loops with 3, 4 or 5 points.
\end{pros}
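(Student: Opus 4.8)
The plan is to mirror the two-step strategy already used for the bipartite case: first reduce to the situation in which every loop is a shoelace, then rule out loops with six or more vertices by an explicit ``cut'' that strictly lowers the cost, the absence of loops with fewer than three vertices being immediate.

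For the first step I would invoke the complete-graph TSP result of~\cite{CDMV}: for $p>1$ the minimum-cost Hamiltonian cycle through any finite set of points of $[0,1]$, labelled in increasing order, is the shoelace associated with the permutation $\tilde\sigma$ of~\reff{sigmatilde}. Applying this separately to the vertex set of each loop of $\nu^*$ forces every loop of an optimal 2-matching to be a shoelace on its own points, since otherwise one could replace that loop, on the same points, by its shoelace and strictly decrease $E[\nu]$. For the lower bound on the loop size, the two defining constraints $\pi(i)\neq i$ and $\pi(\pi(i))\neq i$ exclude fixed points and 2-cycles, so every loop has at least three vertices (equivalently, a loop on one or two vertices would need a self-loop or a double edge, which $\mathcal{K}_N$ does not have).

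The core of the argument is the upper bound. Suppose some loop $\ell$ of $\nu^*$ has $m\geq 6$ vertices, which in increasing order I call $x_1<x_2<\dots<x_m$. Fix any split $3\leq m_1\leq m-3$, for instance $m_1=3$, and replace $\ell$ by the shoelace on $\{x_1,\dots,x_{m_1}\}$ together with the shoelace on $\{x_{m_1+1},\dots,x_m\}$. Inspecting the shoelace edge pattern, one checks that this changes the edge set in exactly two places: the two ``long'' edges of $\ell$ crossing the cut, namely $(x_{m_1-1},x_{m_1+1})$ and $(x_{m_1},x_{m_1+2})$ (one belonging to the odd chain, the other to the even chain of the big shoelace), are removed, and the two ``short'' edges $(x_{m_1-1},x_{m_1})$ and $(x_{m_1+1},x_{m_1+2})$ are added; see Fig.~\ref{Fig::mono}. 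Since $x_{m_1-1}<x_{m_1}<x_{m_1+1}<x_{m_1+2}$, the change in cost is $|x_{m_1-1}-x_{m_1}|^p+|x_{m_1+1}-x_{m_1+2}|^p-|x_{m_1-1}-x_{m_1+1}|^p-|x_{m_1}-x_{m_1+2}|^p$, i.e.\ exactly the difference between an ordered and an unordered two-edge matching on four consecutive points, which for $p>1$ is strictly negative by the same elementary convexity (strict superadditivity of $t\mapsto t^p$) already used for the matching problem~\cite{Caracciolo:159,Caracciolo:169}. This contradicts the optimality of $\nu^*$, so no loop has six or more vertices; combined with the lower bound, only loops with $3$, $4$ or $5$ vertices remain. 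Unlike the bipartite case, the sizes $4$ and $5$ survive precisely because neither can be written as a sum of two integers $\geq 3$, so the cut is simply unavailable.

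The main obstacle, exactly as in the bipartite Lemma, is the bookkeeping of the cut: one must verify that cutting a complete-graph shoelace at an interior position $m_1$ produces two shoelaces and modifies precisely the two edge pairs listed above, uniformly in the parities of $m$ and $m_1$ — the odd and even chains of the big shoelace each supply exactly one crossing edge independently of parity, while the closing edge $(x_1,x_2)$ and the ``top'' edge of $\ell$ remain inside the two blocks as soon as $m_1\geq 2$ and $m-m_1\geq 2$. Once this combinatorial identity is established, the sign of the cost variation is the convexity fact invoked above and nothing further is required.
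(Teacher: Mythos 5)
Your proposal is correct and follows essentially the same route as the paper: reduce each loop to a shoelace via the complete-graph TSP result of~\cite{CDMV}, exclude loops of fewer than 3 points by the simple-graph constraints, and cut any shoelace on $m\geq 6$ points at a position $3\leq m_1\leq m-3$, gaining exactly the (strictly negative, for $p>1$) difference between an ordered and a crossing two-edge matching on four consecutive points, just as in Fig.~\ref{Fig::mono}. Your explicit bookkeeping of which two edges are removed and which two are added is a welcome sharpening of the paper's pictorial argument, but it is not a different proof.
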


In other terms the optimal configurations are composed by the adjacency matrices $A_1$, $A_2$, $A_3$ that are
\begin{subequations}
\begin{align}
& A_3 = \left(
\begin{array}{ccc}
	0 & 1 & 1 \\
	1 & 0 & 1 \\
	1 & 1 & 0 \\
\end{array}
\right)\,, \\
& A_4 = \left(
\begin{array}{cccc}
	0 & 1 & 1 & 0\\
	1 & 0 & 0 & 1\\
	1 & 0 & 0 & 1\\
	0 & 1 & 1 & 0
\end{array}
\right)\,,
\end{align}
\begin{align}
& A_5 = \left(
\begin{array}{ccccc}
0 & 1 & 1 & 0 & 0 \\
1 & 0 & 0 & 1 & 0 \\
1 & 0 & 0 & 0 & 1 \\
0 & 1 & 0 & 0 & 1 \\
0 & 0 & 1 & 1 & 0
\end{array}
\right)\,.
\end{align}
\end{subequations}

In Fig. \ref{N=7_mono} we represent the two solutions when $N=7$.
In~\ref{C} we prove that, similarly to the bipartite case, the number of  2-matching solutions is at most $g_N$ on the complete graph, which for large $N$ grows according to
\begin{equation}
g_N \sim \plas^N \,.
\end{equation}
Also in this case we verified numerically, for accessible $N$, that the set of possible solutions that we have identified is actually realized by some instance of the problem.
\begin{figure*}
	\vspace{1cm}
	\begin{subfigure}[t]{0.48\linewidth}
		\centering\includegraphics[scale=0.5]{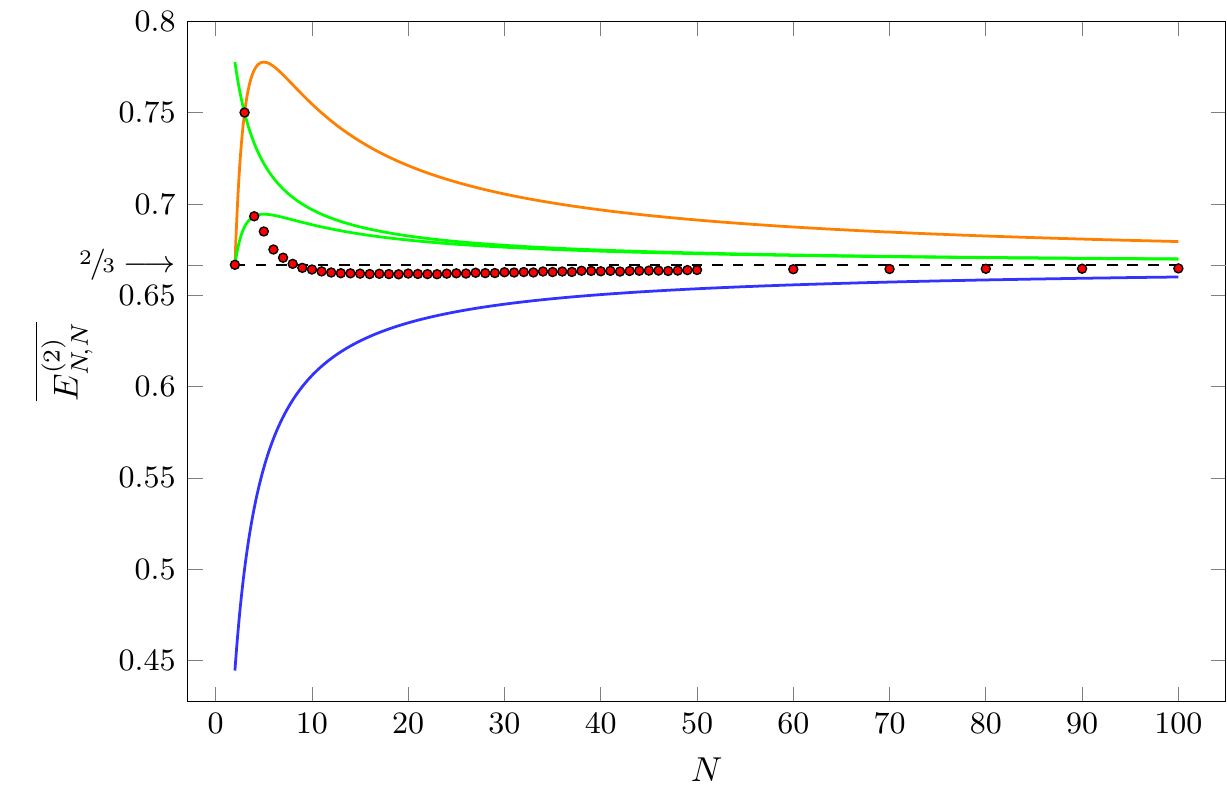} 
		\caption{\footnotesize $\mathcal{K}_{N,N}$ case. The orange line is the cost of the TSP given in~\reff{TSP}; the green lines are, from above, the cost of the optimal fixed 2-matching $\nu_{(2,2,\dots,2,3)}$ given in~\reff{odd} and $\nu_{(2,2,\dots,2)}$ given in~\reff{even}. The dashed black line is the asymptotic value $\frac{2}{3}$ and the blue continuous one is twice the cost of the optimal 1-matching $\frac{2}{3}\frac{N}{N+1}$. Red points are the results of a 2-matching numerical simulation, in which we have averaged over $10^7$ instances.} \label{plot}
	\end{subfigure} \hfill
	\begin{subfigure}[t]{0.48\linewidth}
		\centering\includegraphics[scale=0.5]{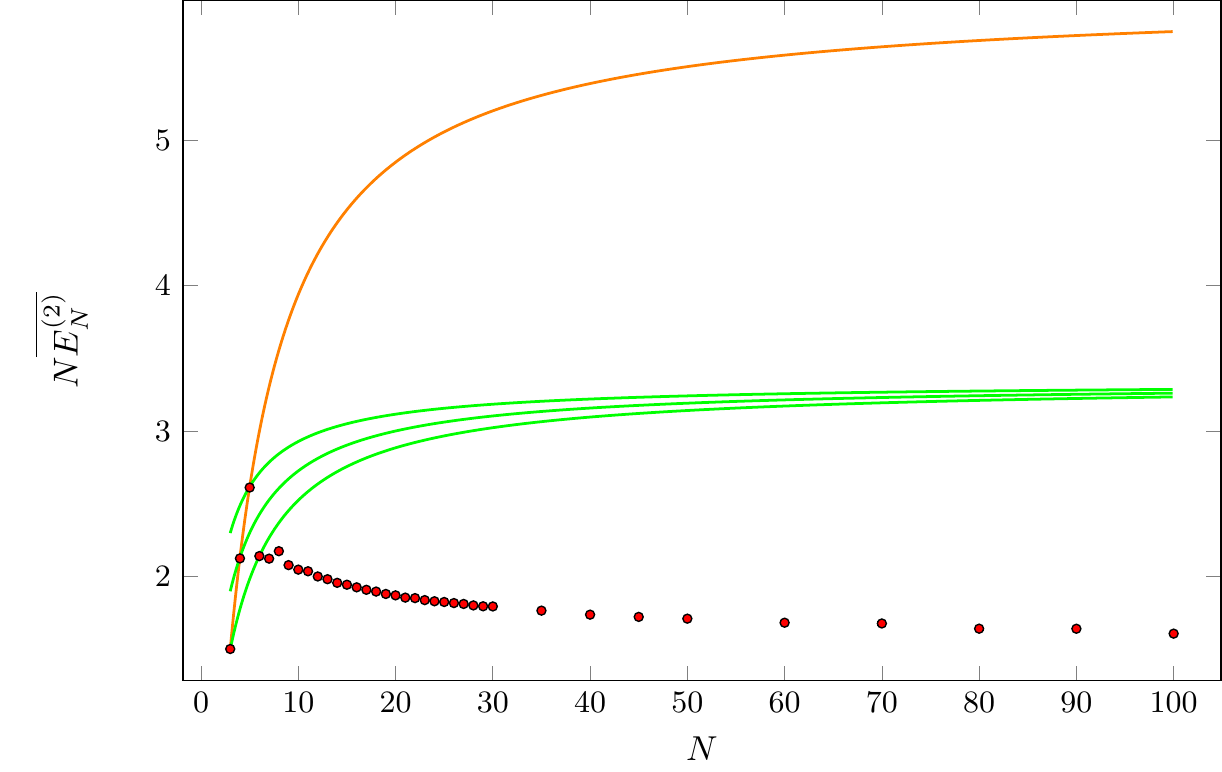} 
		\caption{\footnotesize $\mathcal{K}_N$ case.  Here the average cost is rescaled with $N$. The orange line is the cost of the TSP given in~\reff{TSP_Mono}. The green lines are from above the cost of the fixed 2-matching $\nu_{(3,3,\dots,3,5)}$ given in~\reff{multipleOfThree_plus2}, $\nu_{(3,3,\dots,4)}$ given in~\reff{multipleOfThree_plus1} and $\nu_{(3,3,\dots,3)}$ given in~\reff{multipleOfThree}. 
		Red points are the results of a numerical simulation for the 2-matching, in which we have averaged over $10^5$ instances for $N \le 30$, $10^4$ for $30 < N \le 50$ and $10^3$ for $N>50$.} \label{plotC}
	\end{subfigure}
	\caption{Average optimal costs for various $N$ and for $p=2$.}
\end{figure*}

\section{Bounds on the cost}\label{sec:cost}

Here we will derive the consequences of the results of the previous section, obtaining explicitly some upper bounds on the average optimal cost of the 2-matching problem. We will examine the complete bipartite case first, where we consider, for simplicity, the $p=2$ case~\cite{Caracciolo:171}. Indeed the calculation we perform below can be done also for general $p>1$, but it is much more involved. 
Then we will examine the complete graph case, where we have obtained a very simple expression of the average optimal cost for every $N$, and for every $p>1$~\cite{CDMV}.

\subsection{Bipartite Case}
Let us analyze the problem on the complete bipartite graph $\mathcal{K}_{N,N}$. In~\cite{Caracciolo:171} we derived for $p=2$ the exact result for all $N$ of the TSP when all the points are chosen with a flat distribution in the interval $[0,1]$
\be 
\overline{E_{N,N}^{(2)}[h^*]} = \frac{2}{3} \frac{N^2 + 4 N -3}{(N+1)^2} \label{TSP}
\ee
from which we soon obtain that
\be
\overline{E_{N,N}^{(2)}[\nu^*]} =  
\begin{cases}
\frac{2}{3}  & \hbox{for } N=2\\
\frac{3}{4}  & \hbox{for } N=3 \,,
\end{cases}
\ee
since in the cases $N=2$ and $N=3$ the solutions are the same as in the TSP. \, For $N=4$ we have still only one solution, which corresponds to two cycles on the first and the last 2 red points. Both cycles have the same cost and we easily get that
\be
\overline{E_{4,4}^{(2)}[\nu^*]} =  \frac{52}{75} \, .
\ee
This result can be obtained also in a different way. We first remark that 
\be
\overline{(r_k - b_k)^2} + \overline{(r_{k+1} - b_{k+1})^2} - \overline{(r_{k} - b_{k+1})^2} - \overline{(r_{k+1} - b_{k})^2} = - \frac{2}{(N+1)^2} \label{cut}
\ee
irrespectively from the choice of $1\leq k \leq N-1$.  This is exactly the cost gained by cutting a longer cycle into two smaller ones at position $k$, see Fig.~\ref{Fig::bip}. Therefore the cost for the optimal 2-matching for $N=4$ is the cost for the optimal Hamiltonian cycle, which from~\reff{TSP} is $\frac{58}{75}$, decreased because of a cut, that is by $-\frac{2}{25}$.

For $N=5$ there are two possible optimal solutions that we will denote by $\nu_{(2,3)}$ and $\nu_{(3,2)}$. For both of them
\be
\overline{E_{5,5}^{(2)}[\nu_{(2,3)}]} =  \overline{E_{5,5}^{(2)}[\nu_{(3,2)}]} = \frac{13}{18}
\ee
and therefore 
\be
\overline{E_{5,5}^{(2)}[\nu^*]} = \overline{\min\left\{E_{5,5}^{(2)}[\nu_{(2,3)}], E_{5,5}^{(2)}[\nu_{(3,2)}]\right\}} \leq \frac{13}{18} \, .
\ee
For $N=6$ there are still two possible optimal solutions, that is $\nu_{(3,3)}$ and $\nu_{(2,2,2)}$,
but this time they have not the same average cost, indeed
\begin{align}
\overline{E_{6,6}^{(2)}[\nu_{(3,3)}]}  = & \, \frac{36}{49} = \frac{38}{49} - \frac{2}{49}\\
\overline{E_{6,6}^{(2)}[\nu_{(2,2,2)}]}  = & \, \frac{34}{49} = \frac{38}{49} - \frac{4}{49}
\end{align}
that we have written as the TSP value from~\reff{TSP} decreased, respectively, by one and two cuts~\reff{cut} for $N=6$.

Now it is clear that when $N$ is even the 2-matching with lowest average energy is $\nu_{(2,2,\dots,2)}$ and that
\be
\overline{E_{N,N}^{(2)}[\nu_{(2,2,\dots,2)}]} = \frac{2}{3} \frac{N^2 + 4 N -3}{(N+1)^2} - \frac{N-2}{(N+1)^2} = \frac{1}{3} \frac{N (2 N + 5)}{(N+1)^2}\,, \label{even}
\ee
which is an upper bound for the optimal average cost since, even though this configuration has the minimum average cost, for every fixed instance of disorder there can be another one which is optimal. For $N$ odd  one of the 2-matchings with lowest average energy is $\nu_{(2,2,\dots,2,3)}$ and
\be
\overline{E_{N,N}^{(2)}[\nu_{(2,2,\dots,2,3)}]} = \frac{2}{3} \frac{N^2 + 4 N -3}{(N+1)^2} - \frac{N-3}{(N+1)^2} = \frac{1}{3} \frac{2 N^2 + 5 N + 3}{(N+1)^2} \,, \label{odd}
\ee
a result which shows that essentially the upper bound for the optimal average cost for even and odd large $N$ is the same.

\subsection{Complete Case}

Let us now turn to the problem  on the complete graph. 
In~\cite{CDMV} it is shown that, for every $N$ and every $p>1$, the average optimal cost of the TSP has the expression
\begin{equation}
\overline{E_N^{(p)}[h^*]} = \left[ (N-2) (p+1) +2 \right] \, \frac{\Gamma(N+1)\, \Gamma(p+1)}{\Gamma(N+p+1)} \,. 
\label{TSP_Mono}
\end{equation}
An analogous expression is present in the case of the matching problem \cite{Caracciolo:169}, where the number of points $N$ is even
\begin{equation}
\overline{E_N^{(p)}[\mu^*]} = \frac{N \, \Gamma(N+1) \, \Gamma(p+1)}{2 \, \Gamma(N+p+1)} \,.
\label{matching_Mono}
\end{equation}

Let us now turn to the evaluation of the cost gain when we cut the cycle in two shoelaces sub-cycles. For $p>1$ the cost gain doing one cut can be written as
\begin{equation}
\begin{aligned}
\overline{\left(x_{k+1}-x_k\right)^p} + \overline{\left(x_{k+3}-x_{k+2}\right)^p} - \overline{\left(x_{k+3}-x_{k+1}\right)^p} - \overline{ \left(x_{k+2}-x_{k}\right)^p} = - \frac{2 \,p \, \Gamma(N+1) \, \Gamma(p+1)}{\Gamma(N+p+1)} \,.
\end{aligned}
\end{equation}
For example for $N=6$ (in which the solution is unique since 6 can be written as a sum of 3, 4 and 5 in an unique way as 3+3) and $p=2$ we have
\begin{equation}
\overline{E_6^{(2)}} = \frac{1}{2} - \frac{1}{7} = \frac{5}{14} \,.
\end{equation}
If $N$ is multiple of 3, the lowest 2-matching is, on average, the one with the largest number of cuts i.e. $\nu_{(3,3,\dots,3)}$. The number of cuts is $(N-3)/3$ so that the average cost of this configuration is
\begin{equation}
\begin{aligned}
\overline{E_N^{(p)}[\nu_{(3,3,\dots,3)}]} & 
= N \left( \frac{p}{3} + 1\right) \frac{\Gamma(N+1)\, \Gamma(p+1)}{\Gamma(N+p+1)} \,.
\label{multipleOfThree}
\end{aligned}
\end{equation}
Instead if $N$ can be written as a multiple of 3 plus 1, the minimum average energy configuration is $\nu_{(3,3,\dots ,3,4)}$, which has $(N-4)/3$ cuts and
\begin{equation}
\begin{aligned}
\overline{E_N^{(p)}[\nu_{(3,3,\dots,4)}]} & 
= \left[ N \left( \frac{p}{3} + 1\right) + \frac{2}{3}p \right]\frac{ \Gamma(N+1) \, \Gamma(p+1)}{\Gamma(N+p+1)} \,.
\label{multipleOfThree_plus1}
\end{aligned}
\end{equation}
The last possibility is when $N$ is a multiple of 3 plus 2, so the minimum average energy configuration is $\nu_{(3,3,\dots ,3,5)}$, with $(N-4)/3$ cuts and
\begin{equation}
\begin{aligned}
\overline{E_N^{(p)}[\nu_{(3,3,\dots,5)}]} & 
= \left[ N \left( \frac{p}{3} + 1\right) + \frac{4}{3}p \right]\frac{ \Gamma(N+1) \, \Gamma(p+1)}{\Gamma(N+p+1)} \,.
\label{multipleOfThree_plus2}
\end{aligned}
\end{equation}
In the limit of large $N$ all those three upper bounds behave in the same way. For example
\begin{equation}
\lim\limits_{N \to \infty} \overline{E_N^{(p)}[\nu_{(3,3,\dots,3)}]} = N^{1-p} \left( 1 + \frac{p}{3} \right) \Gamma(p+1) \,.
\end{equation} 
Note that the scaling of those upper bounds for large $N$ is the same of those of matching and TSP.


\begin{figure*}
	\begin{subfigure}[t]{0.48\linewidth}
		\centering\includegraphics[width=1\columnwidth]{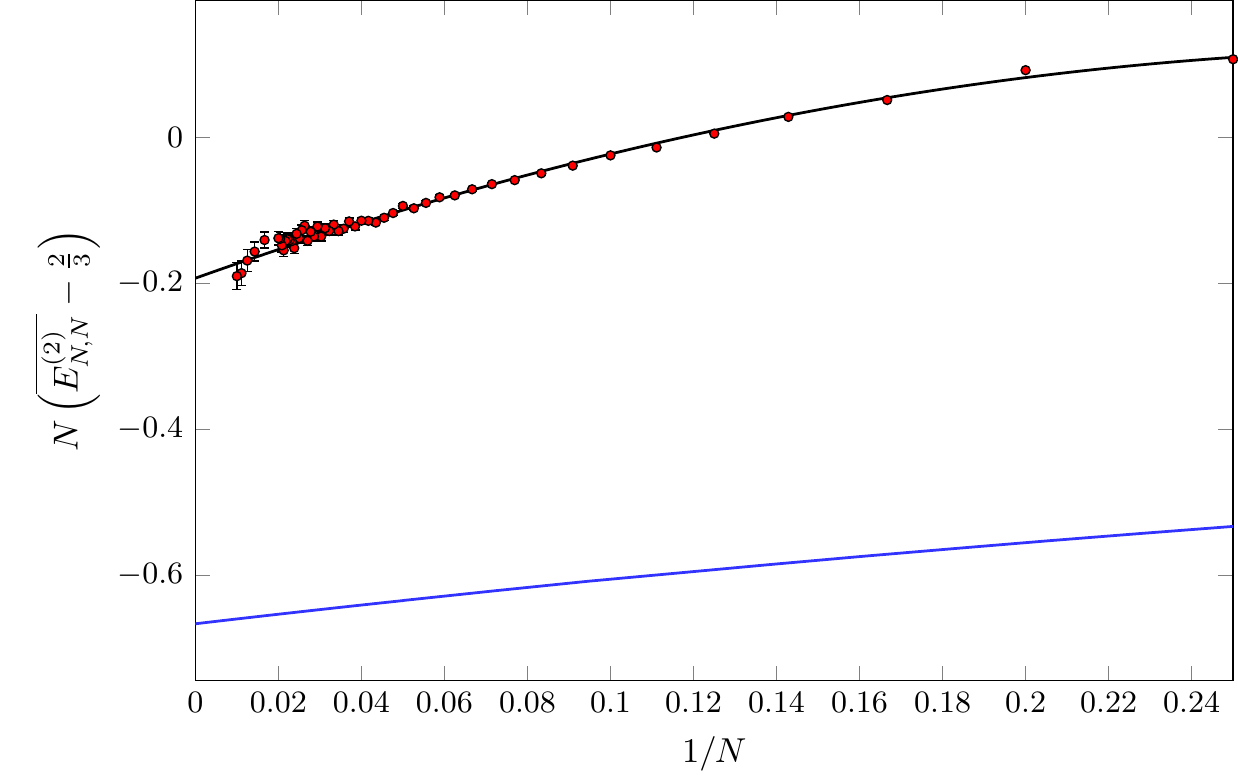} 
		\caption{Numerical results of $N \left(\overline{ E_{N,N}^{(2)} }-\frac{2}{3}\right)$ (red points) in the complete bipartite case; the black line is the fitting function (\ref{fit}). The blue line is two times the matching. The values of $a_1$, $a_2$ and $a_3$ are reported in Table~\ref{tab::fit}.} 
		\label{plot_FSC}
	\end{subfigure}\hfill
	\begin{subfigure}[t]{0.48\linewidth}
		\centering\includegraphics[width=1\columnwidth]{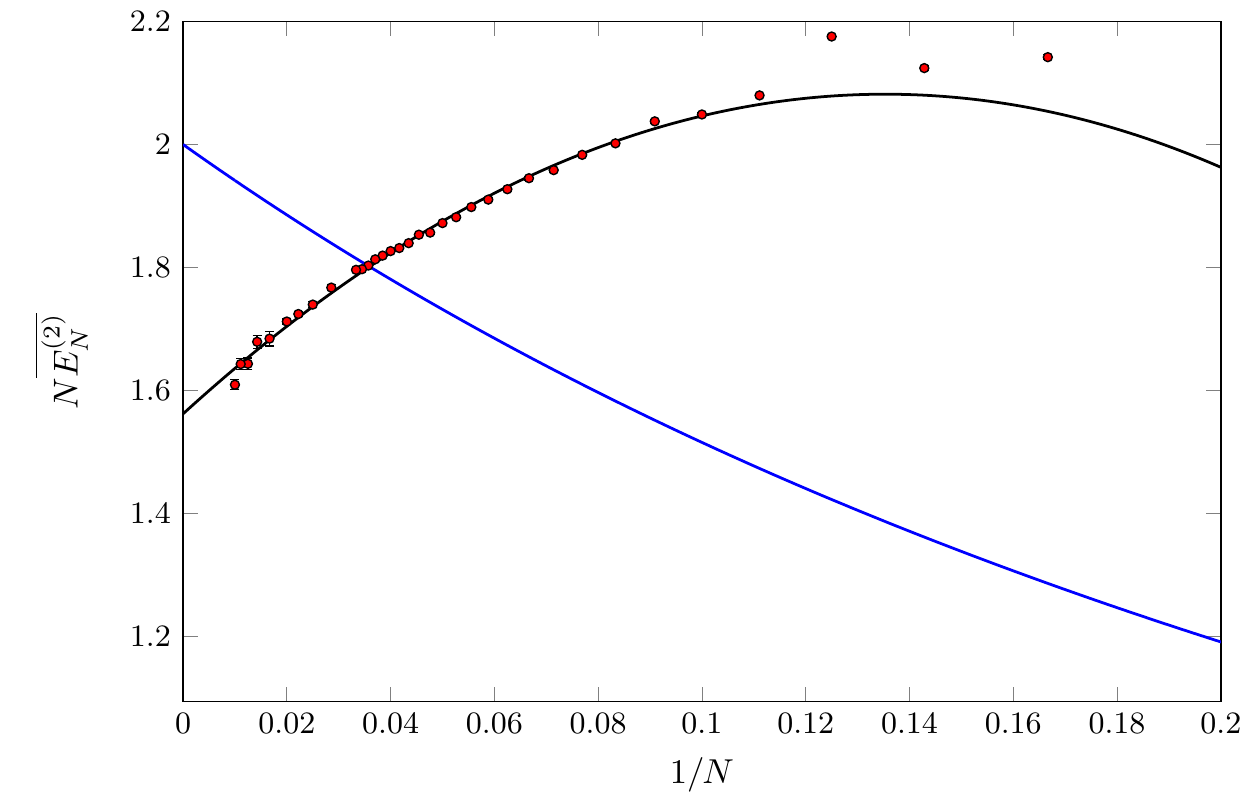} 
		\caption{Numerical results of $N \overline{ E_{N}^{(2)} }$ (red points) in the complete graph case; the black line is the fitting function (\ref{fit_mono}). The blue line is two times the value of the optimal matching as given in equation (\ref{matching_Mono}). The values of $b_0$, $b_1$ and $b_2$ are reported in Table~\ref{tab::fit_mono}.
		} \label{plot_FSC_mono}
	\end{subfigure}
	\caption{}
\end{figure*}

\section{Numerical Results}\label{sec:numerical}
In this section we present our numerical simulations describing briefly the algorithm we have used to find the solution for every instance of the problem. The 2-matching problem has an integer programming formulation. Given a generic simple graph $\mathcal{G} = (\mathcal{V}, \mathcal{E})$, the solution can be uniquely identified by a $\left| \mathcal{V} \right| \times \left| \mathcal{V} \right|$ matrix of occupation numbers $A_{ij}$ which can assume values 0 or 1. In particular $A_{ij}$ assumes value 0 if node $i$ is not connected to node $j$ in the 2-matching solution and 1 otherwise. The problem can be stated as the minimization of the energy function
\begin{equation}
E(A) = \frac{1}{2} \sum_{i=1}^{\left| \mathcal{V} \right|} \sum_{j=1}^{\left|\mathcal{V} \right|} A_{ij} \, w_{ij} \,,
\end{equation}
subject to the constrain (\ref{ConstraintsGeneral}).

We have performed some numerical simulations using a C++ code and the open source GLPK package, a library that solves general large scale linear programming problems. In Fig.~\ref{plot} and~\ref{plotC} we plot the results of some numerical simulations for $p=2$ respectively for the complete bipartite and complete graph case and we compare them with some exact results. In the complete graph case we plot $N \overline{E_N^{(2)}}$ revealing that the scaling of the cost is the same of the TSP and the matching problem. However the two situations are completely different, since in the complete case the bound estimate only gets worse when $N$ increases.

In order to understand the analytic form of the finite-size correction, we have also performed a parametric fit of the quantity $N \left(\overline{ E_{N,N}^{(2)} }-\frac{2}{3}\right)$ using a fitting function of the type
\begin{equation}
f_{B}(N) = a_1 + \frac{a_2}{N} + \frac{a_3}{N^2} \,.
\label{fit}
\end{equation}
In Fig. \ref{plot_FSC} we plot the numerical data and $f_B(N)$. The estimate of the parameters is reported in Table \ref{tab::fit}.

In the complete graph case, we have performed a fit of the rescaled cost $N \overline{ E_N^{(2)} }$ in order to evaluate numerically the asymptotic value of the cost. The fitting function was chosen to be
\begin{equation}
f_{M}(N) = b_0 + \frac{b_1}{N} + \frac{b_2}{N^2} \,.
\label{fit_mono}
\end{equation}
In Fig. \ref{plot_FSC_mono} we report the plot of the numerical data together with $f_M(N)$. Remember that in the complete graph case, the cost of the 2-matching cannot be bounded from below by two times the cost of the optimal matching as happens on the complete bipartite graph. For this reason in Fig. \ref{plot_FSC_mono} we have added the plot of the theoretical value of the optimal matching (given in equation  (\ref{matching_Mono})) multiplied by two.
The numerical values of the parameters are reported in Table \ref{tab::fit_mono}. Note also how in the complete graph case, the first finite-size correction $b_1$ is not only positive but its magnitude is much greater than $a_1$, its bipartite counterpart.


\begin{table}
	\begin{center}
			\begin{tabular}[b]{|c|c|c|}
				\hline
				$a_1$ & $a_2$ & $a_3$\\
				\hline
				$-0.193 \pm 0.003$ & $2.03 \pm 0.05$ & $-3.3 \pm 0.2$ \\
				\hline
			\end{tabular}
	\end{center}
	\caption{Numerical estimates of the parameters $a_1$, $a_2$ and $a_3$ defined in (\ref{fit}).}
	\label{tab::fit}
\end{table}

\begin{table}
	\begin{center}
			\begin{tabular}[b]{|c|c|c|}
				\hline
				$b_0$ & $b_1$ & $b_2$\\
				\hline
				$1.562 \pm 0.005$ & $ 7.7 \pm 0.2$ & $-28 \pm 2$ \\
				\hline
			\end{tabular}
	\end{center}
	\caption{Numerical estimates of the parameters $b_0$, $b_1$ and $b_2$ defined in (\ref{fit_mono}).}
	\label{tab::fit_mono}
\end{table}

\section{Conclusions}\label{sec:conclusions}
In this work we have examined the random Euclidean 2-matching problem in one dimension. We have considered the case in which the model is defined both on the complete bipartite graph and the complete graph, with a weight function which is a power $p$ of the Euclidean distance between the points. On the complete bipartite graph we have proved that in the convex case, i.e. $p>1$ and in the limit of large number of points $N$, the cost is equal to twice the cost of the optimal matching. 
Indeed, for every instance of the disorder, the cost of the 2-matching can be bounded from above by the TSP and from below by two times the cost of the corresponding optimal matching. 
An analogous bound from below lacks in the complete graph case. 
We have characterized the solution for every value of $N$ as a covering of shoelace loops with only 2 or 3 points of one color, in the complete bipartite case and 3, 4 and 5 points in the complete graph case. In the complete bipartite case this gives rise to $\Pad(N-2)$ possible optimal 2-matching. Therefore, in the large $N$ limit one has an exponential number $\plas^N$ of possible solutions, where $\plas$ is the plastic constant. A similar result holds on the complete graph. This is at variance to what happens in other one-dimensional problems when the weight function is convex. For example, in the matching and in the TSP problem on the complete bipartite graph for every instance of the disorder one has only one possible solution: in the matching case one has to match the $k$-th red point with the $k$-th blue one whereas in the TSP case the optimal Hamiltonian cycle is the shoelace configuration. 
Next we have derived some upper bounds on the average optimal cost by cutting the single shoelace loop configuration of the TSP problem in the maximum number of ways. Finally, we have performed some numerical simulations and we have compared them with analytical results and with theoretical curves of the TSP and matching. In the bipartite case 
we have studied numerically the form of the finite-size corrections. In the complete graph case, where we cannot bound the cost from below by the optimal matching, we have studied the large $N$ behavior of the average optimal cost. Our analysis essentially shows that also in the complete graph case the scaling of the cost is the same of the TSP and the matching problem.

In general the study of one-dimensional problems can help to shed light on their higher dimensional counterparts, where one does not really know how to properly treat Euclidean correlations. Recent progress include the study of the bipartite matching problem in $d>1$, where, by means of a scaling ansatz one can deduce not only the correct scaling of the cost but also the value of the average optimal cost and correlation functions in $d=2$~\cite{Caracciolo:158,Caracciolo:163,Caracciolo:162} and some predictions for the finite-size corrections in $d>2$~\cite{Caracciolo:158}. In addition, some of these results were also proven rigorously~\cite{Ambrosio2016}, thanks to the deep connection with optimal transport theory.
Here we have shown that the 2-matching problem, defined on both the complete bipartite and complete graph, is not a trivial model even in one dimension. 
An important question to investigate is if the relevant results we have found here, that connect tightly together matching, TSP and 2-matching problems in one dimension, continue to hold in higher dimension.  In~\cite{Caracciolo:174} we investigate the inequality~\reff{Inequalities}, which holds in any dimension $d$, but is, once more, saturated in $d=2$ on the complete bipartite graph.



\section*{Acknowledgments}
The authors thank Luca Guido Molinari for fruitful discussions. E.M.M wants to thank Giorgio Parisi for the many suggestions regarding the simulations performed. 

\appendix

\section{The Padovan numbers} \label{A}
According to Proposition \ref{2and3}, in the optimal 2-matching configuration of the complete bipartite graph there are only loops of length 2 and 3. Here we will count the number of possible optimal solutions for each value of $N$.
Let $f_N$ be the number of ways in which the integer $N$ can be written as a sum in which the addenda are only 2 and 3. For example, $f_4 = 1$ because $N=4$ can be written only as $2+2$, but $f_5=2$ because $N=5$ can be written as $2+3$ and $3+2$. We simply get the recursion relation
\be
f_N = f_{N-2} + f_{N-3} \label{rec}
\ee
with the initial conditions $f_2 = f_3 = f_4 = 1$. The $N$-th {\em Padovan number} $\Pad(N)$ is defined as $f_{N+2}$. Therefore it satisfies the same recursion relation~\reff{rec} but with the initial conditions $\Pad(0) = \Pad(1) =\Pad(2) =1$.


A generic solution of~\reff{rec} can be written in terms of the roots of the equation
\be
x^3 = x+1 \,.
\ee
There is one real root 
\be 
\plas = \frac{(9 + \sqrt{69})^\frac{1}{3} + (9 - \sqrt{69})^\frac{1}{3}}{ 2^\frac{1}{3} 3^\frac{2}{3}} \approx 1.324717957244746\dots 
\label{plastic}
\ee
known as the {\em plastic} constant and two complex conjugates roots
\be
\begin{aligned}
z_\pm & = \frac{ ( -1 \pm  i\, \sqrt{3}) (9 + \sqrt{69})^\frac{1}{3} +  ( -1 \mp  i\, \sqrt{3})(9 - \sqrt{69})^\frac{1}{3}}{ 2^\frac{4}{3} 3^\frac{2}{3}} \\
& \approx -0.662359\,\text{\dots} \pm i \, 0.56228\dots 
\end{aligned}
\ee
of modulus  less than unity. Therefore
\be
\Pad(N) = a\, \plas^N + b \, z_+^N + b^* \, z_-^N
\ee
and by imposing the initial conditions we get
\be
\Pad(N) = \frac{(z_+-1)(z_--1)}{(\plas-z_+)(\plas-z_-)}\, \plas^N + \frac{(\plas-1)(z_--1)}{(z_+-\plas)(z_+-z_-)}\, z_+^N +\frac{(\plas-1)(z_+-1)}{(z_--\plas)(z_--z_+)}\, z_-^N\,.
\ee
For large $N$ we get
\be
\Pad(N)  \sim \lambda \, \plas^N \label{asym}
\ee
with $\lambda\approx 0.722124\dots$ the real solution of the cubic equation
\be
23 \,t^3 - 23\, t^2 + 6\, t -1 = 0 \, .
\ee
In Fig.~\ref{fig1} we plot the Padovan sequence for a range of values of $N$ and its asymptotic expression.
\begin{figure}[h!]
\centering
\includegraphics[width=0.7\columnwidth]{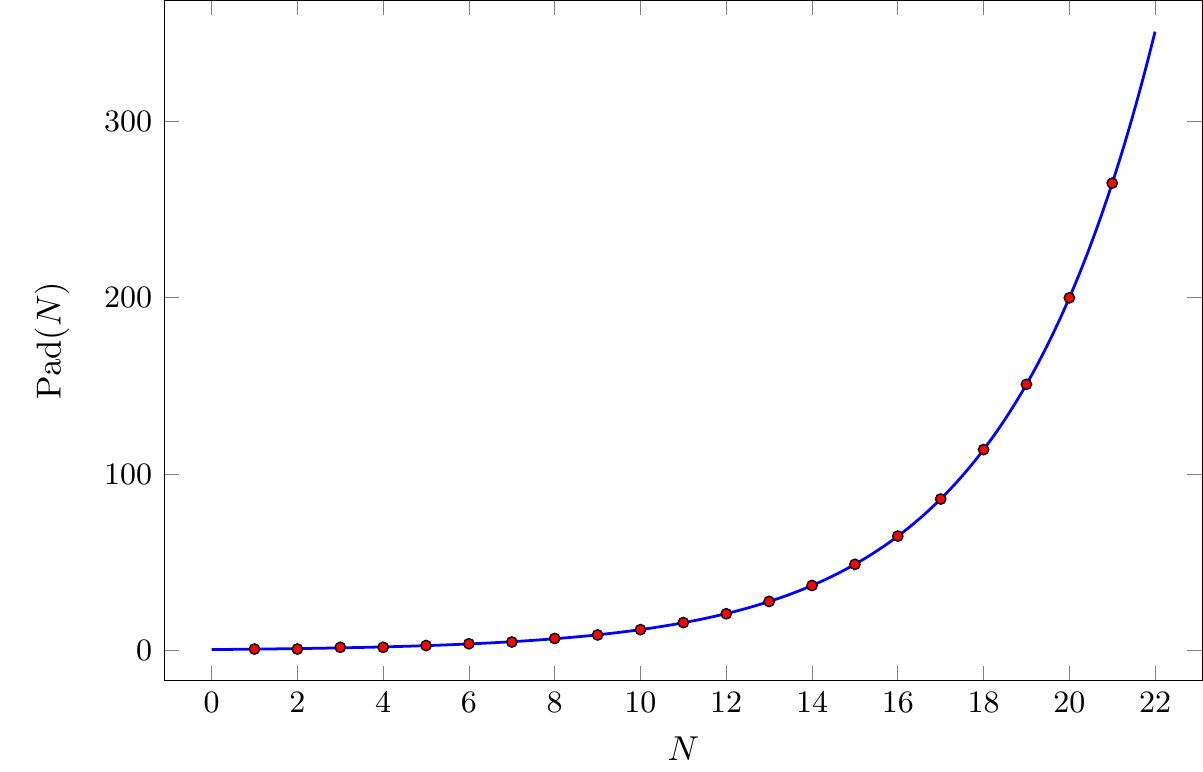} 
\caption{Padovan numbers and their asymptotic expansion.} \label{fig1}
\end{figure}

There is a relation between the Padovan numbers and the Binomial coefficients.
If we consider $k$ addenda equal to 3 and $s$ addenda equal to 2, there are $\binom{k+s}{k}=\binom{k+s}{s}$ possible different orderings. If we fix $N= 3 \, k + 2\, s$ we easily get that
\be
\Pad(N-2) = \sum_{k\ge 0} \sum_{s\ge 0} \delta_{N, 3 \, k + 2\, s } \, \binom{k+s}{k} = \sum_{m\ge 0} \sum_{k\ge 0} \delta_{N,  k + 2\, m } \, \binom{m}{k} \,.
\ee

\section{The recursion on the complete graph} \label{C}
A recursion relation analogous to eq. (\ref{rec}) can be derived for the number of possible solution of the 2-matching problem on the complete graph $\mathcal{K}_N$. Let $g_N$ be the number of ways in which the integer $N$ can be expressed as a sum of 3, 4 and 5. Then $g_N$ satisfies the recursion relation given by
\begin{equation}
g_N = g_{N-3} + g_{N-4} + g_{N-5} \,,
\end{equation}
with the initial conditions $g_3=g_4=g_5=g_6=1$ and $g_7 = 2$. The solution of this recursion relation can be written in function of the roots of the 5-th order polynomial
\begin{equation}
x^5-x^2-x-1=0 \,.
\end{equation}
This polynomial can be written as $(x^2+1)(x^3-x-1)=0$. Therefore the roots will be the same of the complete bipartite case ($\plas$, and $z_{\pm}$) and in addition 
\begin{equation}
y_{\pm} = \pm i \,.
\end{equation}
$g_N$ can be written as
\begin{equation}
g_N = \alpha_1 \plas^N + \alpha_2 z_+^N + \alpha_3 z_-^N + \alpha_4 y_+^N + \alpha_5 y_-^N \,,
\end{equation}
where the constants $\alpha_1$, $\alpha_2$, $\alpha_3$, $\alpha_4$, and $\alpha_5$ are fixed by the initial conditions $g_3=g_4=g_5=g_6=1$ and $g_7 = 2$. When $N$ is large the dominant contribution comes from the plastic constant
\begin{equation}
g_N \simeq \alpha_1 \plas^N \,.
\end{equation}
with $\alpha_1 \approx 0.262126..$.

\section{The plastic constant} \label{B}
In 1928,  shortly after abandoning his architectural studies and becoming a novice monk of the Benedictine Order,
Hans~van~der~Laan discovered a new, unique system of architectural proportions. Its construction is completely based on a single irrational value which he called the plastic number (also known as the plastic constant)~\cite{Maroni}.
This number was originally studied in 1924 by a French engineer, G. Cordonnier, when he was just 17 years old, calling it "radiant number". However, Hans van der Laan was the first who explained how it relates to the human perception of differences in size between three-dimensional objects and demonstrated his discovery in (architectural) design. His main premise was that the plastic number ratio is truly aesthetic in the original Greek sense, i.e. that its concern is not beauty but clarity of perception ~\cite{Padovan}. 
The word plastic was not intended, therefore, to refer to a specific substance, but rather in its adjectival sense, meaning something that can be given a three-dimensional shape~\cite{Padovan}. 
The golden ratio or divine proportion
\be
\phi = \frac{1+\sqrt{5}}{2} \approx 1.6180339887 \,,
\ee
which is a solution of the equation
\be
x^2 = x+1 \label{qe} \,,
\ee
has been studied by Euclid, for example for its appearance in the regular pentagon, and has been used to analyze the most aestetich proportions in the arts.
For example, the golden rectangle, of size $(a+b)\times a$ which may be cut into a square of size $a \times a$ and a smaller rectangle of size $b \times a$ with the same aspect ratio
\be
\frac{a+b}{a} = \frac{a}{b} = \phi \, .
\ee
This amounts to the subdivision of the interval $AB$ of length $a+b$ into $AC$ of length $a$ and $BC$ of length $b$. By fixing $a+b=1$ we get
\be
\frac{1}{a} = \frac{a}{1-a} = \phi \,,
\ee
which implies that $\phi$ is the solution of~\reff{qe}. The segments $AC$ and $BC$, of length, respectively $\frac{1}{\phi^2}(\phi, 1)$ are sides of a golden rectangle.

But the golden ratio fails to generate harmonious relations within and between three-dimensional objects. Van~der~Laan therefore elevates definition of the golden rectangle in terms of space dimension.
Van~der~Laan breaks segment $AB$  in a similar manner, but in three parts. If C and D are points of subdivision, plastic number $\plas$  is defined with
\be
\frac{ AB }{ AD } = \frac{AD}{BC} = \frac{BC}{ AC} = \frac {AC}{ CD} = \frac{CD}{BD} = \plas
\ee 
and by fixing $AB=1$, from $AC = 1 -BC$, $BD = 1 - AD$ we get
\be
\plas^3 = \plas +1 \,.
\ee
The segments $AC$, $CD$ and $BD$, of length, respectively, $\frac{1}{(\plas+1)\plas^2} (\plas^2, \plas, 1)$ can be interpreted as sides of a cuboid analogous to the golden rectangle.

\section*{References}
\bibliographystyle{iopart-num}
\bibliography{AssignmentANDTsp}

\end{document}